\newcommand{\rvac}{|0 \rangle}
\newcommand{\Z}{\mathbb Z}
\newcommand{\C}{\mathbb C}
\numberwithin{equation}{section}
\newtheorem{Theorem}{Theorem}[section]
\newtheorem{Corollary}[Theorem]{Corollary}
\newtheorem{Lemma}[Theorem]{Lemma}
\newtheorem{Proposition}[Theorem]{Proposition}
 { \theoremstyle{definition}

 }
\begin{document}
\allowdisplaybreaks

\newcommand{\arXivNumber}{2106.04773}

\renewcommand{\PaperNumber}{089}

\FirstPageHeading

\ShortArticleName{Virasoro Action on the $Q$-Functions}

\ArticleName{Virasoro Action on the $\boldsymbol{Q}$-Functions}

\Author{Kazuya AOKAGE~$^{\rm a}$, Eriko SHINKAWA~$^{\rm b}$ and Hiro-Fumi YAMADA~$^{\rm c}$}

\AuthorNameForHeading{K.~Aokage, E.~Shinkawa and H.-F.~Yamada}

\Address{$^{\rm a)}$~Department of Mathematics, National Institute of Technology, Ariake College,\\
\hphantom{$^{\rm a)}$}~Fukuoka 836-8585, Japan}
\EmailD{\href{mailto:aokage@ariake-nct.ac.jp}{aokage@ariake-nct.ac.jp}}

\Address{$^{\rm b)}$~Advanced Institute for Materials Research, Tohoku University, Sendai 980-8577, Japan}
\EmailD{\href{mailto:eriko.shinkawa.e8@tohoku.ac.jp}{eriko.shinkawa.e8@tohoku.ac.jp}}

\Address{$^{\rm c)}$~Department of Mathematics, Kumamoto University, Kumamoto 860-8555, Japan}
\EmailD{\href{mailto:hfyamada@kumamoto-u.ac.jp}{hfyamada@kumamoto-u.ac.jp}}

\ArticleDates{Received June 10, 2021, in final form October 05, 2021; Published online October 08, 2021}

\Abstract{A formula for Schur $Q$-functions is presented which describes the action of the Virasoro operators. For a strict partition, we prove a concise formula for $L_{-k}Q_{\lambda}$, where $L_{-k}$ $(k\geq 1)$ is the Virasoro operator.}

\Keywords{$Q$-functions; Virasoro operators}

\Classification{17B68; 05E10}

\rightline{\it To Minoru Wakimoto on his 80th birthday}

\section{Introduction}

The aim of this paper is to discuss Schur $Q$-functions in connection with a representation of the Virasoro algebra.
Schur $Q$-functions are labelled by strict partitions and are defined as the Pfaffian of an alternating matrix.
Let $A=(a_{ij})_{1\leq i,j\leq 2m}$ be an alternating $2m \times 2m$ matrix.
The Pfaffian of $A$ is
\begin{gather*}
\operatorname{Pf}(A) := \sum_{\sigma \in F_{2m}}({\rm{sgn}} \, \sigma)a_{\sigma(1)\sigma(2)}a_{\sigma(3)\sigma(4)}\cdots
a_{\sigma(2m-1)\sigma(2m)},
\end{gather*}
where
\begin{gather*}
F_{2m} := \{\sigma \in S_{2m}; \sigma(1) < \sigma(3) < \cdots <\sigma(2m-1), \sigma(i)<\sigma(i+1)\, (i=1,3,\dots, 2m-1)\}.
\end{gather*}
We see that $|F_{2m}| = (2m-1)!!$. The Laplace expansion of $\operatorname{Pf}(A)$ is as follows.
For $1 \leq i_{1} < \cdots <i_{2\ell}\leq 2m$, let $A_{i_{1}i_{2}\dots i_{2\ell}}$ be the $2\ell \times 2\ell$
alternating matrix consisting of $i_{1}$th row, $i_{2}$th row, \dots, and $i_{1}$th column, $i_{2}$th column, \dots.
Then
\begin{gather*}
\operatorname{Pf}(A) = \sum_{i=2}^{2m}(-1)^{i} \operatorname{Pf}(A_{1i}) \operatorname{Pf}(A_{2 \dots \widehat{i} \dots 2m}).
\end{gather*}
Here $\widehat{i}$ means the omission of $i$. We will utilize this quadratic relation to derive
formulas for $Q$-functions.

Our previous paper \cite{A-S-Y} gives a formula of $L_{k}Q_{\lambda}$ for $k\geq 1$,
where $L_{k}$ denotes the Virasoro operator.
As a continuation of~\cite{A-S-Y} we give in the present paper a formula for $L_{-k}Q_{\lambda}$.
Section~\ref{section2} is a review of $Q$-functions containing some identities which do not seem to be obviously
derived from Pfaffian identities.
In Section~\ref{section3} we first recall the reduced Fock representation of the Virasoro algebra on the space of the $Q$-functions.
Then the main result is given. Proofs consist of direct, simple calculations.

The Virasoro representations of this paper may be applied to, for example, the Kontsevich matrix models by certain rescaling. However we will not discuss here any relationship. Our motivation is to clarify the representation theoretical nature of the Hirota equations for certain soliton type hierarchies. In the final section we will give a conjecture on the Hirota equations for the KdV hierarchy.

\section[Schur's Q-functions]{Schur's $\boldsymbol{Q}$-functions}\label{section2}

A partition is an integer sequence
$\lambda=(\lambda_1,\lambda_2,\dots,\lambda_{\ell})$, $\lambda_1\geq \lambda_2 \geq \dots \geq \lambda_{\ell}> 0$, whose size is $|\lambda|=\lambda_1+\lambda_2+\dots+\lambda_{\ell}$. The number of nonzero parts is the length of $\lambda$, denoted by $\ell(\lambda)$. Let~${\mathcal{SP}}(n)$ be the set of partitions of $n$ into distinct parts. We call a $\lambda \in {\mathcal{SP}(n)}$ strict partition of~$n$.
Let $V={\C}[t_j; j\geq 1,{\rm odd}]$. This is decomposed as $V=\bigoplus_{n=0}^{\infty} V(n)$, where $V(n)$ is the space of homogeneous polynomials of degree $n$, according to the counting $\deg t_j=j$. An inner product of~$V$ is defined by $\langle F, G \rangle=F\big(2\widetilde{\partial}\big)\overline{G}(t)|_{t=0}$, where $2\widetilde{\partial}=\big(2\partial_1, \frac{2}{3}\partial_3, \frac{2}{5}\partial_5, \dots\big)$ with $\partial_{j}=\frac{\partial}{\partial t_j}$.

Schur's $Q$-functions are defined in our context as follows. Put $\xi(t,u)=\sum_{j\geq 1,\,{\rm odd}}t_ju^j$ and define $q_n(t)\in V(n)$ by
\begin{gather*}
{\rm e}^{\xi(t,u)}=\sum_{n=0}^{\infty}q_n(t)u^n.
\end{gather*}
For integers $a$, $b$ with $a>b>0$, define
\begin{gather*}
Q_{a b}(t):=q_{a}(t)q_{b}(t) + 2\sum_{i=1}^{b}(-1)^{i}q_{a+i}(t)q_{b-i}(t),
\\
Q_{b a}(t):=-Q_{a b}(t).
\end{gather*}
Finally, the $Q$-function labelled by the strict partition $\lambda=(\lambda_1,\lambda_2,\dots, \lambda_{2m})$ $(\lambda_1 >\lambda_2 >\dots >\lambda_{2m} \geq 0)$ is defined by
\begin{gather*}
Q_{\lambda}(t)=Q_{\lambda_1\lambda_2\dots \lambda_{2m}}(t)=\operatorname{Pf}{(Q_{\lambda_i\lambda_j})}_{1\leq i,j \leq 2m}.
\end{gather*}
The $Q$-function $Q_{\lambda}(t)$ is homogeneous of degree $|\lambda|$. It is known that $\{Q_{\lambda}(t);\, |\lambda|=n\}$ forms an orthogonal basis for~$V(n)$, with respect to the above inner product. As Pfaffians, they satisfy the quadratic relations (cf.~\cite{H-H}):

If $\ell(\lambda)$ is odd,
\begin{gather*}
 Q_{\lambda_1 \lambda_2 \dots \lambda_{\ell}}(t) = \sum^{\ell}_{i=1}(-1)^{i+1}q_{\lambda_i}(t)Q_{\lambda_1 \dots \widehat{\lambda_i} \dots \lambda_{\ell}}(t).
\end{gather*}

If $\ell(\lambda)$ is even,
\begin{gather*}
 Q_{\lambda_1 \lambda_2 \dots \lambda_{\ell}}(t)=\sum^{\ell}_{i=2}(-1)^{i}Q_{\lambda_1\lambda_i}(t)Q_{\lambda_2 \dots \widehat{\lambda_i}\dots \lambda_{\ell}}(t).
\end{gather*}

It is convenient to define $Q$-function $Q_{\alpha}(t)$ for any non-negative integer sequence $\alpha=(\alpha_1,\alpha_2,\dots,\alpha_{\ell})$. We adopt the following rule for permutations of indices:
\begin{enumerate}\itemsep=0pt
\item If $\alpha_1,\alpha_2,\dots,\alpha_{\ell}$ are all distinct, then $\sigma(\alpha)$ is a strict partition for some permutation $\sigma\in S_{\ell}$, and
\begin{gather*}
Q_{\alpha}(t)=({\rm sgn}\, \sigma)Q_{\sigma(\alpha)}(t).
\end{gather*}
\item If $\alpha_{i}=\alpha_{j}>0$ for some $i\ne j$, then $Q_{\alpha}(t)=0$.

\item Using permutations, $0$'s should be moved in the tail of $\alpha$, keeping $0$'s order. After such permutation, all $0$'s should be deleted.
\end{enumerate}
For example, we have $Q_{0,2,3,0,1}(t)=-Q_{3,2,1}(t)$.
Detailed arguments are found in \cite[Theo\-rem~9.2]{H-H}.
Note that the above quadratic relations hold for $Q_{\alpha}(t)$ with non-negative integer sequence $\alpha=(\alpha_1,\alpha_2,\dots,\alpha_{\ell})$. We also agree that, for $a>0$, $Q_{a,-a}(t)={(-1)}^{a-1}$.

\begin{Lemma}\label{Q-relation2}
Let $\alpha=(\alpha_1,\alpha_2,\dots,\alpha_{\ell})$ be a non-negative integer sequence, and $x$, $y$ be non-negative integers.
\begin{enumerate}\itemsep=0pt
\item[$(1)$] If $\ell(\alpha)$ is\ odd,
\begin{gather*}
Q_{\alpha x}=-q_{x}Q_{\alpha}-\sum_{i=1}^{\ell}{(-1)}^{i}q_{\alpha_i}Q_{\alpha_1 \dots \widehat{\alpha_i} \dots \alpha_{\ell} x}-\sum_{i=1}^{\ell}{(-1)}^{i}Q_{\alpha_i x}Q_{\alpha_1 \dots \widehat{\alpha_{i}} \dots \alpha_{\ell}},
\end{gather*}
\item[$(2)$] If $\ell(\alpha)$ is even,
\begin{gather*}
Q_{\alpha x}=-q_{x}Q_{\alpha}+\sum_{i=2}^{\ell}{(-1)}^{i}Q_{\alpha_1\alpha_i}Q_{\alpha_2\dots \widehat{\alpha_i} \dots\alpha_{\ell} x}+\sum_{i=2}^{\ell}{(-1)}^{i}Q_{\alpha_1\alpha_i x}Q_{\alpha_2\dots\widehat{\alpha_{i}}\dots\alpha_{\ell}},
\end{gather*}
\item[$(3)$] If $\ell(\alpha)$ is odd,
\begin{gather*}
Q_{\alpha x y}=-Q_{x y}Q_{\alpha}-\sum_{i=1}^{\ell}{(-1)}^{i}q_{\alpha_i}Q_{\alpha_1 \dots \widehat{\alpha_i} \dots \alpha_{\ell} x y}-\sum_{i=1}^{\ell}{(-1)}^{i}Q_{\alpha_i x y}Q_{\alpha_1 \dots \widehat{\alpha_{i}} \dots \alpha_{\ell}},
\end{gather*}
\item[$(4)$] If $\ell(\alpha)$ is even,
\begin{gather*}
Q_{\alpha x y}=-Q_{x y}Q_{\alpha}+\sum_{i=2}^{\ell}{(-1)}^{i}Q_{\alpha_{1} \alpha_i}Q_{\alpha_2 \dots \widehat{\alpha_i}
\dots \alpha_{\ell} x y}+\sum_{i=2}^{\ell}{(-1)}^{i}Q_{\alpha_1 \alpha_i x y}Q_{\alpha_2 \dots \widehat{\alpha_{i}} \dots \alpha_{\ell}}.
\end{gather*}
\end{enumerate}
\end{Lemma}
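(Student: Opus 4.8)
The plan is to derive each identity by combining two applications of the quadratic relations recalled above: one that reproduces the sum carrying $q_{\alpha_i}$ (parts~(1),~(3)) or $Q_{\alpha_1\alpha_i}$ (parts~(2),~(4)), and one that reproduces the sum carrying $Q_{\alpha_i x}$, $Q_{\alpha_i x y}$, $Q_{\alpha_1\alpha_i x}$ or $Q_{\alpha_1\alpha_i x y}$; the two contributions then recombine so that the leading term $-q_xQ_\alpha$ (resp.\ $-Q_{xy}Q_\alpha$) cancels and what survives is exactly $Q_{\alpha x}$ (resp.\ $Q_{\alpha x y}$). Throughout I use antisymmetry $Q_{ab}=-Q_{ba}$, the boundary values $Q_{a0}=q_a$ and $Q_{0a}=-q_a$, the rule that trailing zeros are deleted, and the reordering sign $(-1)^{\ell}$. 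The single reusable building block is the ``pivot $x$'' identity
\begin{gather*}
\sum_{i=1}^{\ell}(-1)^{i+1}Q_{\alpha_i x}Q_{\alpha_1\dots\widehat{\alpha_i}\dots\alpha_\ell}=Q_{\alpha x}\qquad(\ell\ \text{odd}),
\end{gather*}
which is nothing but the even relation applied to $(x,\alpha_1,\dots,\alpha_\ell)$, combined with $Q_{x\alpha_1\dots\alpha_\ell}=(-1)^\ell Q_{\alpha x}$ and $Q_{x\alpha_i}=-Q_{\alpha_i x}$; the same identity applied to the odd-length subsequence $(\alpha_2,\dots,\alpha_\ell)$ covers the sub-sums that arise in the even-$\ell$ cases.

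I would treat part~(1) as the model. Its last sum is exactly the building block above, so it equals $Q_{\alpha x}$. For its middle sum I apply the odd relation to the length-$(\ell+2)$ sequence $(\alpha_1,\dots,\alpha_\ell,x,0)$: the summand attached to the appended $0$ is $q_0 Q_{\alpha x}=Q_{\alpha x}$ (with sign $+1$), the summand attached to $x$ is $-q_x Q_{\alpha 0}=-q_x Q_\alpha$, and the summands attached to the $\alpha_i$ reproduce the middle sum. Since the left-hand side is again $Q_{\alpha x 0}=Q_{\alpha x}$, the two copies of $Q_{\alpha x}$ cancel and one reads off
\begin{gather*}
\sum_{i=1}^{\ell}(-1)^{i+1}q_{\alpha_i}Q_{\alpha_1\dots\widehat{\alpha_i}\dots\alpha_\ell x}=q_x Q_\alpha .
\end{gather*}
Substituting the two evaluated sums into the right-hand side of~(1) gives $-q_xQ_\alpha+q_xQ_\alpha+Q_{\alpha x}=Q_{\alpha x}$, as required.

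The remaining parts follow the same pattern. For~(3) I use the odd relation directly on $(\alpha_1,\dots,\alpha_\ell,x,y)$ (already of odd length) as the master expansion; it produces the middle sum together with $-q_xQ_{\alpha y}+q_yQ_{\alpha x}$, while the last sum is evaluated by expanding each length-three block $Q_{\alpha_i x y}=q_{\alpha_i}Q_{xy}-q_xQ_{\alpha_i y}+q_yQ_{\alpha_i x}$ and recognising the three resulting sub-sums via the building block and the odd relation for $Q_\alpha$, giving $Q_{xy}Q_\alpha-q_xQ_{\alpha y}+q_yQ_{\alpha x}$. For the even-length cases~(2) and~(4) the master expansion is the even relation (pivot $\alpha_1$) applied to $(\alpha,x,0)$ and to $(\alpha,x,y)$ respectively; the ``$Q_{\alpha_1\alpha_i x}$'' and ``$Q_{\alpha_1\alpha_i x y}$'' sums are again evaluated by expanding these length-three and length-four blocks with the even relation and collapsing the sub-sums. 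In every case the $q_xQ_\alpha$ (resp.\ $Q_{xy}Q_\alpha$) contribution cancels the leading term and the surviving correction terms match those generated by the master expansion, so the right-hand side telescopes back to $Q_{\alpha x}$ (resp.\ $Q_{\alpha x y}$).

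The main obstacle is bookkeeping rather than conceptual: keeping the position signs of the quadratic relations consistent with the reordering and antisymmetry signs, and in particular handling the appended zero correctly (its deletion and the values $Q_{a0}=q_a$, $Q_{0a}=-q_a$), including the self-cancellation of the extra $Q_{\alpha x}$ that appears in the part~(1) master expansion. The one point that genuinely must be checked with care is that, after expanding the large blocks $Q_{\alpha_1\alpha_i x}$ and $Q_{\alpha_1\alpha_i x y}$, each of the several sub-sums really does collapse --- via the building block or the odd/even relation for $Q_\alpha$ --- to the correction terms produced by the master expansion, so that the final cancellation is exact for both parities of $\ell$.
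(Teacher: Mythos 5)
Your argument is correct and rests on exactly the same ingredients as the paper's proof: the odd/even quadratic Pfaffian relations for $Q$-functions together with the conventions for appended zeros ($Q_{a0}=q_a$, deletion of trailing zeros). The only real difference is organizational --- the paper proves only case (4) by the direct expansion-and-collapse you describe and then obtains (3), (2) and (1) for free by specializing $\alpha_1=0$ and $y=0$, whereas you run a separate master expansion for each of the four cases (your appended-zero trick in part (1) is essentially that specialization rediscovered), so your route is sound but does three times the bookkeeping the paper needs.
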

\begin{proof}
Let $\ell(\alpha,x,y)$ be an even number. From the Pfaffian identity for $Q_{\alpha x y}$, the case (4) follows easily.
The cases (3) and (2) follow from (4) by setting $\alpha_1=0$ and $y=0$, respectively. Finally case (1) is obtained from case (3) by setting $y=0$.
\end{proof}

Next, we recall the boson-fermion correspondence for neutral free fermions $\phi_i\ (i\in \Z)$ (cf.~\cite{J-M}). The Clifford algebra $\mathfrak{B}$ is generated by free fermions $\phi_{i}\ (i\in \Z)$ satisfying the anti-commutation relation:
\begin{gather*}
[\phi_i,\phi_j]_{+}={(-1)}^{i}\delta_{i,-j}.
\end{gather*}
The vector space $F_{B}$ has a basis consisting of $\phi_{i_1}\phi_{i_2}\cdots\phi_{i_s} \rvac$, $i_1>i_2>\dots >i_s\geq 0$, where $\rvac$ is the vacuum vector.
The Clifford algebra $\mathfrak{B}$ acts on $F_{B}$ by $\phi_i\rvac =0$, $i<0$. For odd number~$n$, we define the Hamiltonian by
\begin{gather*}
H^{B}_{n}=\frac{1}{2}\sum_{i \in \Z}{(-1)}^{i-1}\phi_{i}\phi_{-n-i}.
\end{gather*}
The operators $H_{n}^{B}$ $(n\in {\rm \Z_{odd}})$ generate a Heisenberg algebra $\mathfrak{H}^{B}$ with $\big[H_{n}^{B},H_{m}^{B}\big]=\frac{n}{2}\delta_{n,-m}$.
It is known that $F_{B}$ is isomorphic to~$V$:
\begin{gather*}
\sigma_{B}(\rvac)=1,\qquad \sigma_{B}\big(H_{n}^{B}\rvac\big):=\frac{\partial}{\partial p_n}, \qquad \sigma_{B}\big(H_{-n}^{B}\rvac\big):=np_{n},\qquad n\geq 1,\ {\rm odd}.
\end{gather*}
The map $\sigma_{B}\colon \mathfrak{H}^{B} \longrightarrow V$ is called the boson-fermion correspondence of type~$B$. Moreover, for the basis of $F_B$
\[
\sigma_{B}\left(\phi_{\lambda_1}\phi_{\lambda_2}\cdots \phi_{\lambda_{\ell}}\rvac\right)=
 \begin{cases}
2^{-\frac{\ell}{2}}Q_{\lambda_1 \lambda_2 \dots \lambda_{\ell}}& \text{if $\ell$ is even}, \\
2^{-\frac{\ell+1}{2}}Q_{\lambda_1 \lambda_2 \dots \lambda_{\ell}}& \text{if $\ell$ is odd}.
\end{cases}
\]
In what follows, we denote $\sigma_{B}\left(\phi_{\lambda_1}\phi_{\lambda_2}\cdots \phi_{\lambda_{\ell}}\rvac\right)$ by $\phi_{\lambda_1}\phi_{\lambda_2}\cdots \phi_{\lambda_{\ell}}\rvac$.

\begin{Proposition}\label{Q-relation1}
Let $n=2m$. Then
\begin{gather*}
\sum_{i=0}^{m-1}(2i+1)t_{2i+1}(n-(2i+1))t_{n-(2i+1)}=2\sum_{i=0}^{m-1}{(-1)}^{i}(m-i)Q_{n-i, i}.
\end{gather*}
\end{Proposition}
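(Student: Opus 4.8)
The plan is to funnel both sides into a single coefficient extraction $[u^n]$ from a product of exponential generating series. Write $E(u)={\rm e}^{\xi(t,u)}=\sum_{c\ge0}q_c(t)u^c$ and set $\theta(u)=u\partial_u\xi(t,u)=\sum_{j\ge1,\,{\rm odd}}jt_ju^j$. Since $\theta(u)^2=\sum_{a,b}(at_a)(bt_b)u^{a+b}$ with $a,b$ odd, the left-hand side of the proposition is precisely $[u^n]\theta(u)^2$, the coefficient of $u^n$ in $\theta(u)^2$; the whole problem is therefore to show that the right-hand side equals $[u^n]\theta(u)^2$.

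First I would strip the constraint $i\le m-1$ off the right-hand side. Writing $b=i$, $a=n-i$ and using the antisymmetry $Q_{ab}=-Q_{ba}$, the vanishing $Q_{mm}=0$, and the parity of $n=2m$, the pairing $i\mapsto n-i$ shows that the unrestricted sum collapses onto the restricted one; because $m-i=\frac{1}{2}((n-i)-i)$, the weight $(m-i)$ turns into the weight $a=n-i$, giving
\[
2\sum_{i=0}^{m-1}(-1)^i(m-i)Q_{n-i,i}=\sum_{i=0}^{n}(-1)^i(n-i)Q_{n-i,i}=\sum_{\substack{a+b=n\\ a,b\ge0}}(-1)^b\,a\,Q_{ab}=:W .
\]
It then suffices to prove $W=[u^n]\theta(u)^2$.

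Next I would expand every two-row $Q$ uniformly. For all $a,b\ge0$ one has $Q_{ab}=\sum_{k=0}^{b}\varepsilon_k\,q_{a+k}q_{b-k}$ with $\varepsilon_0=1$ and $\varepsilon_k=2(-1)^k$ $(k\ge1)$; this is the definition for $a>b$ and extends to $a\le b$ by antisymmetry and $Q_{aa}=0$. Substituting into $W$ and reindexing by the second factor $r=b-k$ (which automatically makes the first factor $q_{n-r}$) collapses the double sum to $W=\sum_{r=0}^{n}q_{n-r}q_r\,C_r$, where
\[
C_r=(-1)^r\sum_{k=0}^{n-r}(-1)^k\big((n-r)-k\big)\varepsilon_k=(-1)^r(n-r)^2 ,
\]
the inner evaluation $\sum_{k=0}^{M}(-1)^k(M-k)\varepsilon_k=M^2$ being an elementary arithmetic sum. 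This bookkeeping — checking that the uniform formula is legitimate for $a\le b$ and that the reindexing is clean — is the only genuinely delicate point, and it produces the compact intermediate form $W=\sum_{r=0}^{n}(-1)^r(n-r)^2\,q_{n-r}q_r$.

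Finally I would read this as a convolution and simplify. Since $\sum_r(-1)^rq_ru^r=E(-u)$ and $\sum_c c^2q_cu^c=(u\partial_u)^2E(u)$, we get $W=[u^n]\,E(-u)\,(u\partial_u)^2E(u)$. Because $\xi$ contains only odd powers of $u$ we have $\xi(t,-u)=-\xi(t,u)$, hence $E(-u)=1/E(u)$; moreover $(u\partial_u)^2E(u)=E(u)\big(\theta(u)^2+\theta_2(u)\big)$ with $\theta_2(u)=u\partial_u\theta(u)=\sum_j j^2t_ju^j$. Thus $W=[u^n]\big(\theta(u)^2+\theta_2(u)\big)$, and since $n$ is even while $\theta_2$ carries only odd powers of $u$, the term $[u^n]\theta_2(u)$ vanishes. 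This parity cancellation is the punchline: it leaves $W=[u^n]\theta(u)^2$, which is exactly the left-hand side, completing the proof.
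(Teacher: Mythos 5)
Your argument is correct, and it takes a genuinely different route from the paper. The paper proves this identity by passing through the boson--fermion correspondence of type $B$: it rewrites the left-hand side as $4\sum_{i\,\mathrm{odd}}p_ip_{n-i}$, realizes this as a sum of quartic fermion monomials $\phi_j\phi_{-j+i}\phi_k\phi_{-k+(n-i)}\rvac$, and then does a case analysis of which indices contract, landing on $2\sum_i(-1)^i(m-i)\phi_{n-i}\phi_i\rvac$. You instead stay entirely inside $V={\C}[t_j]$ and reduce everything to the single generating-function identity ${\rm e}^{\xi(t,u)}{\rm e}^{\xi(t,-u)}=1$: it underlies both the uniform antisymmetric expansion $Q_{ab}=\sum_k\varepsilon_kq_{a+k}q_{b-k}$ (one checks $R_{ab}+R_{ba}=2(-1)^b\delta_{a+b,0}$, so the extension to $a\le b$ is legitimate away from $(0,0)$, which carries weight $a=0$ in your sum $W$ anyway) and the final step $E(-u)=E(u)^{-1}$, after which the parity of $n=2m$ kills the $\theta_2$ term. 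I verified the symmetrization $2\sum_{i<m}(-1)^i(m-i)Q_{n-i,i}=\sum_{i=0}^n(-1)^i(n-i)Q_{n-i,i}$, the inner sum $\sum_{k=0}^M(-1)^k(M-k)\varepsilon_k=M^2$, and the resulting intermediate form $W=\sum_r(-1)^r(n-r)^2q_{n-r}q_r$; all are correct, and the identity checks numerically for $n=2,4$. What your approach buys is self-containedness --- no Clifford algebra or Fock space is needed, only the definition of $q_n$ and $Q_{ab}$ --- at the cost of the somewhat fussy bookkeeping around the uniform expansion of $Q_{ab}$ for $a\le b$, which you correctly identify as the delicate point and should write out in full (it is exactly the computation showing $R_{ab}+R_{ba}=2(-1)^b\delta_{a+b,0}$). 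The paper's fermionic proof, by contrast, makes the appearance of the two-row $Q$-functions structurally transparent, since $\phi_{n-i}\phi_i\rvac$ maps to $\tfrac12 Q_{n-i,i}$ by definition of $\sigma_B$.
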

\begin{proof}
First we rewrite the left-hand side of this equation by using power sum symmetric functions.
\begin{gather*}
\sum_{i=0}^{m-1}(2i+1)t_{2i+1}(n-(2i+1))t_{n-(2i+1)}=4\sum_{i\geq 1, \, {\rm odd}}^{n}p_{i}p_{n-i}.
\end{gather*}
For an odd number $i$, the operator $H_{i}^{B}$ acts on ${F_{B}}$. By the boson-fermion correspondence, the right hand side equals
\begin{gather*}
 \sum_{i\geq 1, \, {\rm odd}}^{n}\sum_{j,k \in \Z}{(-1)}^{j+k}\phi_{j}\phi_{-j+i}\phi_{k}\phi_{-k+(n-i)} \rvac.
\end{gather*}
Since free fermions $\phi_{i}$ $(i<0)$ act on vacuum vector $\rvac$ as~$0$, the above summation becomes
\begin{gather}
 \sum_{i\geq 1,\, {\rm odd}}^{n}\sum_{-n+i \leq j \leq n \atop 0\leq k\leq n-i}{(-1)}^{j+k}\phi_{j}\phi_{-j+i}\phi_{k}\phi_{-k+(n-i)} \rvac \nonumber\\
\qquad{} =\sum_{i\geq 1, \,{\rm odd}}^{n}\sum_{-n+i \leq j < 0 \atop 0\leq k\leq n-i}{(-1)}^{j+k}\phi_{j}\phi_{-j+i}\phi_{k}\phi_{-k+(n-i)} \rvac \label{ppQ1}
\\
\qquad\quad{} +\sum_{i\geq 1, \,{\rm odd}}^{n}\sum_{0 \leq j \leq i \atop 0\leq k\leq n-i}{(-1)}^{j+k}\phi_{j}\phi_{-j+i}\phi_{k}\phi_{-k+(n-i)} \rvac \label{ppQ2}
\\
\qquad\quad{} +\sum_{i\geq 1,\, {\rm odd}}^{n}\sum_{i < j \leq n \atop 0\leq k\leq n-i}{(-1)}^{j+k}\phi_{j}\phi_{-j+i}\phi_{k}\phi_{-k+(n-i)} \rvac.\label{ppQ3}
\end{gather}
Here it is verified that the part~(\ref{ppQ2}) equals~$0$. Next, we consider the parts (\ref{ppQ1}) and (\ref{ppQ3}). For the term ${(-1)}^{j+k}\phi_{j}\phi_{-j+i}\phi_{k}\phi_{-k+(n-i)}$, we only need to consider the cases that
$j$ or $-j+i$ belongs to $\{-k, -(n-i)+k\}$. That is, ${(-1)}^{j+k}\phi_{j}\phi_{-j+i}\phi_{k}\phi_{-k+(n-i)}$
\begin{gather*}
= \begin{cases}
 \phi_{-k}\phi_{i+k}\phi_{k}\phi_{-k+(n-i)} & \text{if $j=-k$}, \\
 -\phi_{i+k}\phi_{-k}\phi_{k}\phi_{-k+(n-i)} & \text{if $j=i+k$}, \\
 -\phi_{-(n-i)+k}\phi_{-j+i}\phi_{k}\phi_{-k+(n-i)} & \text{if $j=-(n-i)+k$},\\
 \phi_{n-k}\phi_{-j+i}\phi_{k}\phi_{-k+(n-i)} & \text{if $j=n-k$},
\end{cases}
\\
=
 \begin{cases}
 {(-1)}^{k+1}\phi_{i+k}\phi_{-k+(n-i)} & \text{if $j=-k$ or $j=i+k$}, \\
 {(-1)}^{k+1}\phi_{k}\phi_{-k+n} & \text{if $j=-(n-i)+k$ or $j=n-k$}.
\end{cases}
\end{gather*}
Hence we have
\begin{gather*}
\sum_{i\geq 1, \, {\rm odd}}^{n}\sum_{-n+i \leq j \leq n \atop 0\leq k\leq n-i}{(-1)}^{j+k}\phi_{j}\phi_{-j+i}\phi_{k}\phi_{-k+(n-i)} \rvac\\
\qquad{}=\sum_{i\geq 1,\, {\rm odd}}^{n}\sum_{0\leq k\leq n-i}{(-1)}^{k+1}\left(2\phi_{i+k}\phi_{-k+(n-i)}+2\phi_{k}\phi_{-k+n}\right) \rvac\\
\qquad{}=2\sum_{i\geq 1,\, {\rm odd}}^{n}\sum_{0\leq k\leq n-i}{(-1)}^{k+1}\left(\phi_{i+k}\phi_{-k+(n-i)}+\phi_{k}\phi_{-k+n}\right) \rvac\\
\qquad{}=2\sum_{i=0}^{m-1}{(-1)}^{i}(n-2i)\phi_{n-i}\phi_{i} \rvac\\
\qquad{}=2\sum_{i=0}^{m-1}{(-1)}^{i}(m-i)Q_{n-i,i}.\tag*{\qed}
\end{gather*}\renewcommand{\qed}{}
\end{proof}

\section{Reduced Fock representation of the Virasoro algebra}\label{section3}
For a positive odd integer $j$, put $a_{j}=\sqrt{2}\partial_j$ and $a_{-j}=\frac{j}{\sqrt{2}}t_j$ so that they satisfy the Heisenberg relation as operators on $V$:
\begin{gather*}
[a_{j}, a_{i}] = j\delta_{j+i,0}.
\end{gather*}
For an integer $k$, put
\begin{gather*}
L_k = \frac{1}{2} \sum_{j \in \textrm{$\Z$}_{\rm{odd}}} {:}a_{-j} a_{j+2k}{:} +\frac{1}{8}\delta_{k,0},
\end{gather*}
where
\begin{gather*}
 {:}a_{j} a_{i}{:}
=
\begin{cases}
 a_j a_i &\text{if $j\leq i$}, \\
 a_ia_j &\text{if $j> i$}
\end{cases}
\end{gather*}
is the normal ordering. For example,
\begin{gather*}
L_2=2\partial_1\partial_3+\sum_{j\geq1,\,{\rm odd}}jt_j\partial_{j+4},\qquad L_1=\partial_1^2+\sum_{j\geq 1,\,{\rm odd}}jt_{j}\partial_{j+2},
\\
L_{0}=\sum_{j\geq 1}jt_j\partial_{j}+\frac{1}{8}\,{\rm id},
\\
L_{-1}=\frac{1}{4}t_{1}^{2}+\sum_{j\geq 3,\,{\rm odd}}jt_{j}\partial_{j-2}, \qquad \text{and} \qquad L_{-2}=\frac{3}{2}t_1t_3+\sum_{j\geq 5,\,{\rm odd}}jt_j\partial_{j-4}.
\end{gather*}

More generally, it is verified, by Proposition~\ref{Q-relation1}, that
\begin{gather*}
L_{-k}=\frac{1}{4}\sum_{i=0}^{k-1}(2i+1)t_{2i+1}(2k-(2i+1))t_{2k-(2i+1)}+\sum_{i\geq 2k+1,\,{\rm odd}}it_i\partial_{i-2k}\\
\hphantom{L_{-k}}{} =\frac{1}{2}\sum_{j=0}^{k-1}{(-1)}^{j}(k-j)Q_{2k-j,j}+\sum_{j\geq 2k+1,\,{\rm odd}}jt_j\partial_{j-2k}.
\end{gather*}

It is known that the operators $L_k$ on $V$ satisfy the Virasoro relation:
\begin{gather*}
[L_k, L_\ell] = 2(k-\ell)L_{k+\ell}+\frac{k^3-k}{3}\delta_{k+\ell, 0},\qquad k, \ell \in \Z.
\end{gather*}
A representation of the Virasoro algebra ${\mathcal L}=\oplus_{k\in \mathbb{Z}}{\mathbb{C}\ell_{k}}\oplus{\mathbb{C}z}$ with central charge $1$ is given by
$\ell_{k} \mapsto \frac{1}{2}L_{k}$, $z \mapsto 1$, which we recall the reduced Fock representation. We have $L_k \cdot v \in V(n-2k)$ for $v \in V(n)$. The inner product $\langle\ ,\, \rangle$ defined in Section~\ref{section2} is contravariant:
\begin{gather*}
\langle L_{k}v, w \rangle=\langle v, L_{-k}w \rangle, \qquad v,w \in V.
\end{gather*}
Therefore the reduced Fock representation is infinitesimally unitary. The singular vectors are discussed in~\cite{Y}. For the non-reduced Fock representation of the Virasoro algebra, see
for example~\cite{W-Y}.

\begin{Proposition}
\begin{gather*}
L_{-1}q_n=(n+1)q_{n+2}+\frac{1}{2}Q_{n 2}, \qquad n\geq 0.
\end{gather*}
\end{Proposition}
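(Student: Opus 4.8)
The plan is to reduce everything to the single differentiation rule $\partial_j q_n = q_{n-j}$ (with the convention $q_m = 0$ for $m < 0$), which follows immediately from the generating function: applying $\partial_{t_j}$ to ${\rm e}^{\xi(t,u)} = \sum_m q_m u^m$ gives
\begin{gather*}
\sum_m (\partial_j q_m) u^m = u^j {\rm e}^{\xi(t,u)} = \sum_m q_m u^{m+j},
\end{gather*}
and comparison of the coefficients of $u^n$ yields the claim. Inserting this into the explicit shape $L_{-1} = \frac14 t_1^2 + \sum_{j \geq 3,\,{\rm odd}} j t_j \partial_{j-2}$ recorded above, the differential part turns into a multiplication operator:
\begin{gather*}
L_{-1} q_n = \tfrac14 t_1^2 q_n + \sum_{j \geq 3,\,{\rm odd}} j t_j q_{n-j+2}.
\end{gather*}

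Next I would expand the right-hand side of the assertion. Since $q_0 = 1$, $q_1 = t_1$ and $q_2 = \tfrac12 t_1^2$, the definition of $Q_{n2}$ gives $Q_{n2} = q_n q_2 - 2 q_{n+1} q_1 + 2 q_{n+2}$, hence
\begin{gather*}
\tfrac12 Q_{n2} = \tfrac14 t_1^2 q_n - t_1 q_{n+1} + q_{n+2}.
\end{gather*}
Substituting this into $(n+1)q_{n+2} + \tfrac12 Q_{n2}$ and cancelling the common term $\tfrac14 t_1^2 q_n$ against the corresponding term of $L_{-1} q_n$, the desired identity becomes equivalent, after absorbing the leftover $t_1 q_{n+1}$ as the $j=1$ contribution, to the single homogeneity relation
\begin{gather*}
\sum_{j \geq 1,\,{\rm odd}} j t_j q_{n+2-j} = (n+2) q_{n+2}.
\end{gather*}

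It then remains to establish this last relation. The quickest route is to apply the Euler operator $u \partial_u$ to ${\rm e}^{\xi(t,u)} = \sum_m q_m u^m$: the left-hand side produces $\sum_m m q_m u^m$, while $u \partial_u \xi(t,u) = \sum_{j \geq 1,\,{\rm odd}} j t_j u^j$ gives $\big(\sum_{j \geq 1,\,{\rm odd}} j t_j u^j\big)\sum_m q_m u^m$; comparing coefficients of $u^{n+2}$ yields exactly $(n+2) q_{n+2} = \sum_{j} j t_j q_{n+2-j}$. Equivalently, one invokes Euler's identity for the degree-$(n+2)$ homogeneous polynomial $q_{n+2}$ together with $\partial_j q_{n+2} = q_{n+2-j}$.

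The only genuine care is in the index conventions at small $n$: the expansion $Q_{n2} = q_n q_2 - 2 q_{n+1} q_1 + 2 q_{n+2}$ was read off from the definition valid for $n > 2$, so for $n = 0,1,2$ one must instead invoke the antisymmetry and vanishing rules for $Q_\alpha$ (for instance $Q_{02} = -Q_2 = -q_2$ and $Q_{22} = 0$). Verifying that the reduced formula for $\tfrac12 Q_{n2}$ persists in these cases is the main, if minor, obstacle; it does persist, because the requisite polynomial identities (such as $\tfrac12 q_2^2 - q_1 q_3 + q_4 = 0$) hold identically, whereas the generating-function derivation of the homogeneity relation is uniform in $n \geq 0$ and needs no modification.
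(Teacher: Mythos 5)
Your proof is correct and is essentially the same argument as the paper's: the paper also writes out $L_{-1}$ explicitly, relates the derivative part to $\partial_u {\rm e}^{\xi(t,u)}$ (which is exactly your Euler identity $\sum_j j t_j q_{m-j}=mq_m$ in generating-function form), and matches the result against the expansion $\tfrac12 Q_{n2}=\tfrac12 q_2q_n-q_1q_{n+1}+q_{n+2}$. The only difference is cosmetic --- you extract coefficients of $u^n$ earlier and explicitly verify the small-$n$ cases of the $Q_{n2}$ expansion, which the paper leaves implicit.
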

\begin{proof}
It is verified that
\begin{gather*}
 L_{-1}{\rm e}^{\xi(t,u)}
 = \left(\frac{t_1^2}{4}+3t_3\partial_1+5t_5\partial_3+\cdots \right){\rm e}^{\xi(t,u)}\\
\hphantom{L_{-1}{\rm e}^{\xi(t,u)}}{} = \left(\frac{t_1^2}{4}+3t_3u+5t_5u^3+\cdots \right){\rm e}^{\sum_{j\geq 1,\, {\rm odd}}t_ju^j}
\end{gather*}
and
\begin{gather*}
\frac{\partial}{\partial u}{\rm e}^{\xi(t,u)}
 = \frac{\partial}{\partial u}\Big({\rm e}^{\sum_{j\geq 1,\, {\rm odd}}t_ju^j}\Big) \\
 \hphantom{\frac{\partial}{\partial u}{\rm e}^{\xi(t,u)}}{}
 = \big(t_1+3t_3u^2+5t_5u^4+7t_7u^6+\cdots\big){\rm e}^{\sum_{j\geq 1,\, {\rm odd}}t_ju^j}.
\end{gather*}
By the relations $t_1=q_1$ and ${t_1}^2=2q_2$, we have
\begin{gather*}
uL_{-1}{\rm e}^{\xi(t,u)}=\left(\frac{1}{2}uq_2-q_1\right){\rm e}^{\xi(t,u)}+\frac{\partial}{\partial u}{\rm e}^{\xi(t,u)}.
\end{gather*}
Here
\begin{align*}
\left(\frac{1}{2}uq_2-q_1\right)\sum_{n=0}^{\infty}q_nu^n&= \frac{1}{2}q_2\sum_{n=0}^{\infty}q_nu^{n+1}-q_1\sum_{n=0}^{\infty}q_{n}u^n\\
&= u\left(\frac{1}{2}q_2\sum^{\infty}_{n=0}q_nu^n-q_1\sum_{n=0}^{\infty}q_{n+1}u^{n}\right)-q_1,
\end{align*}
and
\begin{gather*}
 \frac{\partial}{\partial u}\sum_{n=0}^{\infty}q_nu^n = \sum_{n=0}^{\infty}nq_nu^{n-1}=u\sum_{n=0}^{\infty}(n+2)q_{n+2}u^n+q_1.
\end{gather*}
Therefore we have
\begin{align*}
 L_{-1}q_n &= \frac{1}{2}q_2q_n-q_1q_{n+1}+(n+2)q_{n+2}\\
 & = (n+1)q_{n+2}+\frac{1}{2}Q_{n 2}.\tag*{\qed}
\end{align*}\renewcommand{\qed}{}
\end{proof}

Similarly, we have
\begin{Proposition}
\begin{gather*}
L_{-2}q_n=(n+2)q_{n+4}+Q_{n 4}-\frac{1}{2}Q_{n 3, 1}, \qquad n\geq0.
\end{gather*}
\end{Proposition}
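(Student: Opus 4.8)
The plan is to imitate the proof of the preceding proposition, applying $L_{-2}$ to the generating function ${\rm e}^{\xi(t,u)}$ and then reading off a coefficient of $u$. First I would use the explicit form $L_{-2}=\frac{3}{2}t_1t_3+\sum_{j\geq 5,\,{\rm odd}}jt_j\partial_{j-4}$ to compute
\begin{gather*}
L_{-2}{\rm e}^{\xi(t,u)}=\Big(\frac{3}{2}t_1t_3+\sum_{j\geq 5,\,{\rm odd}}jt_ju^{j-4}\Big){\rm e}^{\xi(t,u)},
\end{gather*}
and compare the tail $\sum_{j\geq 5,\,{\rm odd}}jt_ju^{j-4}$ with $\frac{\partial}{\partial u}{\rm e}^{\xi(t,u)}=\big(\sum_{j\geq 1,\,{\rm odd}}jt_ju^{j-1}\big){\rm e}^{\xi(t,u)}$. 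Multiplying by $u^3$ removes precisely the two lowest terms $t_1$ and $3t_3u^2$ of this $u$-derivative, giving the analogue of the identity used for $L_{-1}$,
\begin{gather*}
u^3L_{-2}{\rm e}^{\xi(t,u)}=\frac{3}{2}t_1t_3u^3{\rm e}^{\xi(t,u)}+\frac{\partial}{\partial u}{\rm e}^{\xi(t,u)}-\big(t_1+3t_3u^2\big){\rm e}^{\xi(t,u)}.
\end{gather*}

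Extracting the coefficient of $u^{n+3}$ (equivalently, the coefficient of $u^n$ in $L_{-2}{\rm e}^{\xi(t,u)}$) then yields
\begin{gather*}
L_{-2}q_n=(n+4)q_{n+4}+\frac{3}{2}t_1t_3q_n-q_1q_{n+3}-3t_3q_{n+1}.
\end{gather*}
The remaining task is to match this with the claimed right-hand side. I would rewrite the $t$-variables through $t_1=q_1$, $t_1^2=2q_2$ and $t_3=q_3-\frac{1}{6}q_1^3$ (the last from the $u^3$-coefficient of ${\rm e}^{\xi}$), and expand the target using the definition $Q_{n4}=q_nq_4-2q_{n+1}q_3+2q_{n+2}q_2-2q_{n+3}q_1+2q_{n+4}$ together with the odd-length quadratic relation $Q_{n,3,1}=q_nQ_{3,1}-q_3Q_{n,1}+q_1Q_{n,3}$.

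After both sides are written as monomials in the $q$'s, every monomial cancels termwise except that the comparison of the coefficients of $q_n$ requires the single identity $q_2^2=2q_1q_3-2q_4$. This relation is not of Pfaffian type; it follows from $\xi(t,-u)=-\xi(t,u)$, whence ${\rm e}^{\xi(t,u)}{\rm e}^{\xi(t,-u)}=1$ and $\sum_{k=0}^{N}(-1)^{N-k}q_kq_{N-k}=0$, the case $N=4$ being exactly what is needed.

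The main obstacle, relative to the $L_{-1}$ case, is the leading term $\frac{3}{2}t_1t_3$: it is a genuine product rather than a single $q_m$, so the answer cannot be read off directly and the $t_3$-dependence must be carried through and then eliminated. Arranging the bookkeeping so that the cubic and quartic monomials in $q_1$ collapse -- which is precisely where $q_2^2=2q_1q_3-2q_4$ enters -- is the delicate point; the rest is routine coefficient extraction and careful sign-tracking in the expansion of $Q_{n,3,1}$.
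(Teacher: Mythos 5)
Your proposal is correct and follows essentially the same route as the paper: the same generating-function identity $u^3L_{-2}{\rm e}^{\xi(t,u)}=\big({-}t_1-3t_3u^2+\frac{3}{2}t_1t_3u^3\big){\rm e}^{\xi(t,u)}+\partial_u{\rm e}^{\xi(t,u)}$, followed by coefficient extraction and an algebraic match with the stated right-hand side. The only (cosmetic) difference is in the final bookkeeping: the paper packages $t_3=\frac13(q_3-Q_{2,1})$ and $t_1t_3=\frac13(2q_4-Q_{3,1})$ before comparing, whereas you eliminate $t_3=q_3-\frac16q_1^3$ and close the gap with $q_2^2=2q_1q_3-2q_4$ (which, incidentally, is just the relation $Q_{2,2}=0$, so it is of Pfaffian type after all); both verifications are equivalent and correct.
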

\begin{proof}
It is verified that
\begin{align*}
L_{-2}{\rm e}^{\xi(t,u)}
&= \left(\frac{3}{2}t_1t_3+5t_5\partial_1+7t_7\partial_3+\cdots \right){\rm e}^{\xi(t,u)}\\
&= \left(\frac{3}{2}t_1t_3+5t_5u+7t_7u^3+\cdots \right){\rm e}^{\sum_{j\geq 1,\, {\rm odd}}t_ju^j}
\end{align*}
and
\begin{align*}
\frac{\partial}{\partial u}{\rm e}^{\xi(t,u)}
&= \frac{\partial}{\partial u}\Big({\rm e}^{\sum_{j\geq 1, \,{\rm odd}}t_ju^j}\Big)\\
&= \big(t_1+3t_3u^2+5t_5u^4+7t_7u^6+\cdots\big){\rm e}^{\sum_{j\geq 1,\, {\rm odd}}t_ju^j}.
\end{align*}
Therefore
\begin{gather*}
 u^3L_{-2}{\rm e}^{\xi(t,u)}=\left(-t_1-3t_3u^2+\frac{3}{2}t_1t_3u^3\right){\rm e}^{\xi(t,u)}+\frac{\partial}{\partial u}{\rm e}^{\xi(t,u)}.
\end{gather*}
We have $t_3=\frac{1}{3}(q_3-Q_{2, 1})$ and $t_1t_3=\frac{1}{3}(2q_4-Q_{3, 1})$.
Therefore
\begin{eqnarray*}
 u^3L_{-2}{\rm e}^{\xi(t,u)}=\left(-q_1-u^2(q_3-Q_{2, 1})+\frac{1}{2}u^3(2q_4-Q_{3, 1})\right){\rm e}^{\xi(t,u)}+\frac{\partial}{\partial u}{\rm e}^{\xi(t,u)}.
\end{eqnarray*}
Here the first term equals
\begin{gather*}
-q_1\sum_{n=0}^{\infty}q_nu^n-(q_3-Q_{2,1})\sum_{n=0}^{\infty}q_nu^{n+2}+\frac{1}{2}(2q_4-Q_{3,1})\sum_{n=0}^{\infty}q_nu^{n+3}\\
\qquad{} =u^3\left(-q_1\sum_{n=0}^{\infty}q_{n+3}u^n-(q_3-Q_{2,1})\sum_{n=0}^{\infty}q_{n+1}u^{n}+\frac{1}{2}(2q_4-Q_{3,1})\sum_{n=0}^{\infty}q_nu^n\right)\\
\qquad\quad{} +u^3\big({-}q_1\big(u^{-3}+q_1u^{-2}+q_{2}u^{-1}\big)-(q_3-Q_{2,1})u^{-1}\big)
\end{gather*}
and the second term equals
\begin{gather*}
 \sum_{n=0}^{\infty}nq_nu^{n-1}=u^3\sum_{n=0}^{\infty}(n+4)q_{n+4}u^n+u^3\big(q_1u^{-3}+2q_{2}u^{-2}+3q_{3}u^{-1}\big).
\end{gather*}
Also it is easy to check that
\begin{gather*}
u^3\big({-}q_1\big(u^{-3}+q_1u^{-2}+q_{2}u^{-1}\big)-(q_3-Q_{2,1})u^{-1}\big)+u^3\big(q_1u^{-3}+2q_{2}u^{-2}+3q_{3}u^{-1}\big)\\
\qquad{} =u^2(2q_3-q_1q_2+Q_{2,1})=0.
\end{gather*}
Hence
\begin{align*}
L_{-2}q_n& = -q_{1}q_{n+3}-(q_3-Q_{2,1})q_{n+1}+\frac{1}{2}(2q_4-Q_{3,1})q_n+(n+4)q_{n+4} \\
&=-q_{1}q_{n+3}-(3q_3-q_2q_1)q_{n+1}+\frac{1}{2}(4q_4-q_{3}q_{1})q_n+(n+4)q_{n+4} \\ 
& =(n+2)q_{n+4}+Q_{n4}-\frac{1}{2}Q_{n3,1}.\tag*{\qed} 
\end{align*}\renewcommand{\qed}{}
\end{proof}

From Proposition \ref{Q-relation1}, we obtain the following formula.
\begin{align*}
L_{-k}(vw)&= (L_{-k}v)w+v(L_{-k}w)-\frac{1}{4}\sum_{j=0}^{k-1}(2j+1)t_{2j+1}(2k-(2j+1))t_{2k-(2j+1)}vw\\
 &= (L_{-k}v)w+v(L_{-k}w)-\frac{1}{2}\sum_{j=0}^{k-1}{(-1)}^{j}(k-j)Q_{2k-j,j}vw
\end{align*}
for $v,w \in V$. In particular $k=1, 2$, we see
\begin{gather}
L_{-1}(vw)=(L_{-1}v)w+v(L_{-1}w)-\frac{1}{2}q_2vw,\label{LR_{-1}}\\
L_{-2}(vw)=(L_{-2}v)w+v(L_{-2}w)-\frac{1}{2}(2q_4-Q_{3,1})vw.\label{LR_{-2}}
\end{gather}

\begin{Proposition}\label{LQ}
Let $\alpha=(\alpha_1,\alpha_2,\dots, \alpha_{\ell})$ be a positive integer sequence. Then
\begin{gather*}
(1)\quad L_{-1}Q_{\alpha}=\sum_{i=1}^{\ell}(\alpha_i+1)Q_{\alpha+2\epsilon_i}+\frac{1}{2}Q_{\alpha,2},\\
(2)\quad L_{-2}Q_{\alpha}=\sum_{i=1}^{\ell}(\alpha_i+2)Q_{\alpha+4\epsilon_i}+Q_{\alpha,4}-\frac{1}{2}Q_{\alpha,3,1}.
\end{gather*}
\end{Proposition}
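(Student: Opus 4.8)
The plan is to prove $(1)$ and $(2)$ simultaneously by induction on the length $\ell=\ell(\alpha)$, driven by three ingredients already at hand: the formulas for $L_{-1}q_n$ and $L_{-2}q_n$ from the two Propositions above, the twisted Leibniz rules for $L_{-1}$ and $L_{-2}$ recorded above (the product rules following from Proposition~\ref{Q-relation1}, which pass $L_{-k}$ through a product at the cost of one correction term), and the quadratic Pfaffian relations for the $Q_\alpha$ together with Lemma~\ref{Q-relation2}. One preliminary remark keeps the scheme clean: I would run every step with $Q_\alpha$ regarded as defined for an arbitrary non-negative integer sequence, as in the conventions stated before Lemma~\ref{Q-relation2}. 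Then raising a part to form $\alpha+2\epsilon_i$ or $\alpha+4\epsilon_i$, and appending indices to form $Q_{\alpha,2}$, $Q_{\alpha,4}$ or $Q_{\alpha,3,1}$, never forces a reordering or a case split, even when entries coincide or must be resorted; all such bookkeeping is absorbed into the sign rule for $Q_\alpha$.

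For the base cases, $\ell=1$ is exactly the first Proposition, since $Q_{(\alpha_1)}=q_{\alpha_1}$ and $Q_{(\alpha_1+2)}=q_{\alpha_1+2}$, and similarly for $(2)$. The case $\ell=2$ I would dispatch by a direct computation: expand $Q_{\alpha_1\alpha_2}$ through its defining formula $Q_{ab}=q_aq_b+2\sum_{i=1}^{b}(-1)^iq_{a+i}q_{b-i}$, apply the Leibniz rule to each product, and substitute $L_{-k}q_n$. As a consistency check one may also note the empty case $L_{-1}1=\tfrac12 q_2$ and $L_{-2}1=q_4-\tfrac12 Q_{3,1}$, already matching the formulas read at $\ell=0$.

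For the inductive step with $\ell$ odd I would begin from the Pfaffian relation $Q_\alpha=\sum_{i=1}^{\ell}(-1)^{i+1}q_{\alpha_i}Q_{\alpha_1\cdots\widehat{\alpha_i}\cdots\alpha_\ell}$, apply the Leibniz rule for $L_{-1}$ termwise, and insert $L_{-1}q_{\alpha_i}$ from the Proposition and $L_{-1}Q_{\alpha_1\cdots\widehat{\alpha_i}\cdots\alpha_\ell}$ from the induction hypothesis at length $\ell-1$. The output separates into leading and correction terms. The leading terms are $(-1)^{i+1}(\alpha_i+1)q_{\alpha_i+2}Q_{\alpha_1\cdots\widehat{\alpha_i}\cdots\alpha_\ell}$ together with the terms $(-1)^{i+1}(\alpha_j+1)q_{\alpha_i}Q_{\beta_{ij}}$ for $j\ne i$, where $\beta_{ij}$ denotes $\alpha$ with $\alpha_i$ deleted and $\alpha_j$ raised by $2$; comparison with the Pfaffian expansion of $Q_{\alpha+2\epsilon_k}$ identifies these as its diagonal ($k=i$) and off-diagonal ($k=j$) pieces, so that they resum to $\sum_k(\alpha_k+1)Q_{\alpha+2\epsilon_k}$. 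The correction terms, which carry the appended index $2$ coming from $L_{-1}q_{\alpha_i}$, from the hypothesis, and from the $-\tfrac12 q_2$ in the Leibniz rule, form exactly the right-hand side of Lemma~\ref{Q-relation2}$(1)$ with $x=2$, hence collapse to $\tfrac12 Q_{\alpha,2}$. The same computation for $(2)$ produces, besides the leading sum $\sum_k(\alpha_k+2)Q_{\alpha+4\epsilon_k}$, a single-appended-index block collapsing by Lemma~\ref{Q-relation2}$(1)$ with $x=4$ to $Q_{\alpha,4}$, and a two-appended-index block collapsing by Lemma~\ref{Q-relation2}$(3)$ with $(x,y)=(3,1)$ to $-\tfrac12 Q_{\alpha,3,1}$. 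For $\ell$ even I would instead expand $Q_\alpha=\sum_{i=2}^{\ell}(-1)^{i}Q_{\alpha_1\alpha_i}Q_{\alpha_2\cdots\widehat{\alpha_i}\cdots\alpha_\ell}$, use the length-$2$ base case for $L_{-k}Q_{\alpha_1\alpha_i}$ and the hypothesis at length $\ell-2$, and reassemble the correction terms through parts $(2)$ and $(4)$ of Lemma~\ref{Q-relation2}, the leading terms resumming via the even Pfaffian expansion of $Q_{\alpha+2\epsilon_k}$, respectively $Q_{\alpha+4\epsilon_k}$.

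The main obstacle is exactly this reassembly. One must verify that the double sum over the peeled index $i$ and the induction index $j$ reproduces, sign by sign and term by term, the Pfaffian expansion of the target $\sum_k(\alpha_k+1)Q_{\alpha+2\epsilon_k}$, respectively $\sum_k(\alpha_k+2)Q_{\alpha+4\epsilon_k}$, and that the correction terms match the exact right-hand sides of Lemma~\ref{Q-relation2}. A short audit of signs shows that the $(-1)^{i+1}$ from the Pfaffian relation, the $(-1)^i$ inside Lemma~\ref{Q-relation2}, and the factors $-\tfrac12$ in the Leibniz rules conspire correctly, so that no identity beyond Lemma~\ref{Q-relation2} and the two $q_n$-formulas is needed and the difficulty is purely organizational. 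I would therefore present the odd case in full detail and indicate that the even case runs identically, with parts $(2)$ and $(4)$ of Lemma~\ref{Q-relation2} taking the roles of parts $(1)$ and $(3)$.
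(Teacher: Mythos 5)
Your proposal is correct and follows essentially the same route as the paper: induction on the length via the Pfaffian quadratic relations, the twisted Leibniz rules for $L_{-1}$ and $L_{-2}$, the formulas for $L_{-k}q_n$, and the collapse of the correction terms through Lemma~\ref{Q-relation2}(1) and (3) (respectively (2) and (4) in the even case). The only differences are presentational: you carry general non-negative sequences throughout instead of first reducing to strict partitions, and you make explicit the $\ell=2$ base case and the even-length step, which the paper dispatches with ``the case of even $\ell(\lambda)$ is similar.''
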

\begin{proof}
If $\alpha_i=\alpha_j$ for some $i \ne j$, then the equations hold as $0=0$. Therefore, taking the sign~$(\pm 1)$ into account, it suffices to prove the equations for the case $\alpha=\lambda$ is a strict partition. Use induction on the length of $\lambda$. First we see~(1) for the case $\ell(\lambda)$ is odd. By equation~(\ref{LR_{-1}}),
\begin{gather}
 L_{-1}\left(\sum_{i=1}^{\ell}{(-1)}^{i+1}q_{\lambda_i}Q_{\lambda_1 \dots \widehat{\lambda_i}\dots \lambda_{\ell}}\right)\nonumber\\
\qquad{} =\sum_{i=1}^{\ell}{(-1)}^{i+1}\left((L_{-1}q_{\lambda_i})Q_{\lambda_1 \dots \widehat{\lambda_i}\dots \lambda_{\ell}}+q_{\lambda_i}\big(L_{-1}Q_{\lambda_1 \dots \widehat{\lambda_i}\dots \lambda_{\ell}}\big)-\frac{1}{2}q_{2}q_{\lambda_i}Q_{\lambda_1 \dots \widehat{\lambda_i}\dots \lambda_{\ell}}\right).\!\!\!\!\label{L_{-1}Q}
\end{gather}
By induction hypothesis, and the first term and second term in the right hand side equal, respectively,
\begin{gather*}
 \sum_{i=1}^{\ell}{(-1)}^{i+1}\left((\lambda_i+1)q_{\lambda_i+2}+\frac{1}{2}Q_{\lambda_i 2}\right)Q_{\lambda_1 \dots \widehat{\lambda_i}\dots \lambda_{\ell}},\qquad \text{and}\\
 \sum_{i=1}^{\ell}{(-1)}^{i+1}q_{\lambda_i}\left(\sum_{j=1,\,j\ne i}^{\ell}(\lambda_j+1)Q_{\lambda_1 \dots \widehat{\lambda_i}\dots \lambda_{\ell} +2\epsilon_j}+\frac{1}{2}Q_{\lambda_1 \dots \widehat{\lambda_i}\dots \lambda_{\ell} 2}\right).
\end{gather*}
Hence the equation (\ref{L_{-1}Q}) reads
\begin{gather*}
\sum_{i=1}^{\ell}{(-1)}^{i+1}\left((\lambda_i+1)q_{\lambda_i+2}Q_{\lambda_1 \dots \widehat{\lambda_i}\dots \lambda_{\ell}}+q_{\lambda_i}\sum_{j=1,j\ne i}^{\ell}(\lambda_j+1)Q_{\lambda_1 \dots \widehat{\lambda_i}\dots \lambda_{\ell} +2\epsilon_j}\right)\\
 \qquad\quad{} +\frac{1}{2}\sum_{i=1}^{\ell}{(-1)}^{i+1}\big(Q_{\lambda_i 2}Q_{\lambda_1 \dots \widehat{\lambda_i}\dots \lambda_{\ell}}+q_{\lambda_i}Q_{\lambda_1 \dots \widehat{\lambda_i}\dots \lambda_{\ell} 2}-q_2q_{\lambda_i}Q_{\lambda_1 \dots \widehat{\lambda_i}\dots \lambda_{\ell}} \big)\\
\qquad{} =\sum_{i=1}^{\ell}(\lambda_i+1)Q_{\lambda+2\epsilon_i}+\frac{1}{2}\sum_{i=1}^{\ell}{(-1)}^{i+1}\big(Q_{\lambda_i 2}Q_{\lambda_1 \dots \widehat{\lambda_i}\dots \lambda_{\ell}}+q_{\lambda_i}Q_{\lambda_1 \dots \widehat{\lambda_i}\dots \lambda_{\ell} 2}\big) -\frac{1}{2}q_2Q_{\lambda}.
\end{gather*}
By Lemma \ref{Q-relation2}(1), the result follows. The case of even $\ell(\lambda)$ is similar.
Next we prove (2) in Proposition~\ref{LQ}. Let $\ell(\lambda)$ be odd. By equation~(\ref{LR_{-2}}),
\begin{gather}
 L_{-2}\left(\sum_{i=1}^{\ell}{(-1)}^{i+1}q_{\lambda_i}Q_{\lambda_1 \dots \widehat{\lambda_i}\dots \lambda_{\ell}}\right)\label{L_{-2}Q}\\
 =\sum_{i=1}^{\ell}{(-1)}^{i+1}\left((L_{-2}q_{\lambda_i})Q_{\lambda_1 \dots \widehat{\lambda_i}\dots \lambda_{\ell}}+q_{\lambda_i}\big(L_{-2}Q_{\lambda_1 \dots \widehat{\lambda_i}\dots \lambda_{\ell}}\big)-\frac{1}{2}(2q_{4}-Q_{3, 1})q_{\lambda_i}Q_{\lambda_1 \dots \widehat{\lambda_i}\dots \lambda_{\ell}}\right).\nonumber
\end{gather}
By the induction hypothesis, the first and second terms in the right hand side are, respectively,
\begin{gather*}
 \sum_{i=1}^{\ell}{(-1)}^{i+1}\left((\lambda_i+2)q_{\lambda_i+4}+Q_{\lambda_i 4}-\frac{1}{2}Q_{\lambda_i 3, 1}\right)Q_{\lambda_1 \dots \widehat{\lambda_i}\dots \lambda_{\ell}},\qquad \text{and}\\
 \sum_{i=1}^{\ell}{(-1)}^{i+1}q_{\lambda_i}\left(\sum_{j=1,j\ne i}^{\ell}(\lambda_j+2)Q_{\lambda_1 \dots \widehat{\lambda_i}\dots \lambda_{\ell} +4\epsilon_j}+Q_{\lambda_1 \dots \widehat{\lambda_i}\dots \lambda_{\ell} 4}-\frac{1}{2}Q_{\lambda_1 \dots \widehat{\lambda_i}\dots \lambda_{\ell} 3, 1}\right).
\end{gather*}
Hence the equation (\ref{L_{-2}Q}) reads
\begin{gather*}
 \sum_{i=1}^{\ell}{(-1)}^{i+1}\left((\lambda_i+2)q_{\lambda_i+4}Q_{\lambda_1 \dots \widehat{\lambda_i}\dots \lambda_{\ell}}+q_{\lambda_i}\sum_{j=1,j\ne i}^{\ell}(\lambda_j+2)Q_{\lambda_1 \dots \widehat{\lambda_i}\dots \lambda_{\ell} +4\epsilon_j}\right)\\
\qquad{}+\sum_{i=1}^{\ell}{(-1)}^{i+1}\big(Q_{\lambda_i 4}Q_{\lambda_1 \dots \widehat{\lambda_i}\dots \lambda_{\ell}}+q_{\lambda_i}Q_{\lambda_1 \dots \widehat{\lambda_i}\dots \lambda_{\ell} 4}-q_4q_{\lambda_i}Q_{\lambda_1 \dots \widehat{\lambda_i}\dots \lambda_{\ell}}\big)\\
\qquad{} +\frac{1}{2}\sum_{i=1}^{\ell}{(-1)}^{i+1}\big({-}Q_{\lambda_i 3, 1}Q_{\lambda_1 \dots \widehat{\lambda_i}\dots \lambda_{\ell}}-q_{\lambda_i}Q_{\lambda_1 \dots \widehat{\lambda_i}\dots \lambda_{\ell} 3, 1}+Q_{3, 1}q_{\lambda_i}Q_{\lambda_1 \dots \widehat{\lambda_i}\dots \lambda_{\ell}}\big).
\end{gather*}
Hence
\begin{gather*}
\sum_{i=1}^{\ell}(\lambda_i+2)Q_{\lambda+4\epsilon_i}+\sum_{i=1}^{\ell}{(-1)}^{i+1}\big(Q_{\lambda_i 4}Q_{\lambda_1 \dots \widehat{\lambda_i}\dots \lambda_{\ell}}+q_{\lambda_i}Q_{\lambda_1 \dots \widehat{\lambda_i}\dots \lambda_{\ell} 4}\big)-q_4Q_{\lambda}\\
\qquad{} +\frac{1}{2}\sum_{i=1}^{\ell}{(-1)}^{i+1}\big({-}Q_{\lambda_i 3, 1}Q_{\lambda_1 \dots \widehat{\lambda_i}\dots \lambda_{\ell}}-q_{\lambda_i}Q_{\lambda_1 \dots \widehat{\lambda_i}\dots \lambda_{\ell} 3, 1}\big)+\frac{1}{2}Q_{3, 1}Q_{\lambda}.
\end{gather*}
The result follows immediately from Lemma \ref{Q-relation2}(1) and (3). The case of even $\ell(\lambda)$ is similar.
\end{proof}

\begin{Theorem}\label{TM1}
Let $\alpha=(\alpha_1,\alpha_2,\dots,\alpha_{\ell})$ be a positive integer sequence. Then
\begin{gather*}
L_{-k}Q_{\alpha}=\sum_{i=1}^{\ell}(\alpha_i+k)Q_{\alpha+2k\epsilon_i}+\frac{1}{2}\sum_{i=0}^{k-1}{(-1)}^{i}(k-i)Q_{\alpha,2k-i,i}, \qquad k\geq1.
\end{gather*}
\end{Theorem}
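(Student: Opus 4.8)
The plan is to prove the identity by induction on the length $\ell=\ell(\alpha)$, following the template already established by Proposition~\ref{LQ} for $k=1,2$. As in that proof, I would first reduce to the case where $\alpha=\lambda$ is a strict partition: if two parts of $\alpha$ coincide then $Q_\alpha=0$, the tail Pfaffians $Q_{\alpha,2k-i,i}$ also vanish, and the two summands $(\alpha_p+k)Q_{\alpha+2k\epsilon_p}$, $(\alpha_{p'}+k)Q_{\alpha+2k\epsilon_{p'}}$ attached to the equal parts cancel by antisymmetry, so both sides read $0=0$; the signed general case then follows from the strict case. The engine of the induction is the general product rule for $L_{-k}(vw)$ derived from Proposition~\ref{Q-relation1}, together with the two quadratic relations for $Q_\lambda$.

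The base of the induction is $\ell=1$, namely
\[
L_{-k}q_n=(n+k)q_{n+2k}+\frac{1}{2}\sum_{i=0}^{k-1}(-1)^i(k-i)Q_{n,2k-i,i},
\]
which specializes at $k=1,2$ to the two preceding propositions. I would prove it by the same generating-function computation: applying the explicit operator $L_{-k}$ to ${\rm e}^{\xi(t,u)}$ and using $\partial_j{\rm e}^{\xi}=u^j{\rm e}^{\xi}$ together with $\frac{\partial}{\partial u}{\rm e}^{\xi}=\big(\sum_{j}jt_ju^{j-1}\big){\rm e}^{\xi}$, one obtains
\[
u^{2k}L_{-k}{\rm e}^{\xi}=\frac{u^{2k}}{2}\Big(\sum_{j=0}^{k-1}(-1)^j(k-j)Q_{2k-j,j}\Big){\rm e}^{\xi}-\Big(\sum_{i\geq 1,\,{\rm odd}}^{2k-1}it_iu^i\Big){\rm e}^{\xi}+u\frac{\partial}{\partial u}{\rm e}^{\xi}.
\]
Extracting the coefficient of $u^{n+2k}$ and rewriting the $t_iq$ and $Q_{2k-j,j}q_n$ contributions through the quadratic relations identifies the right-hand side with the claimed combination of $q_{n+2k}$ and $Q_{n,2k-i,i}$. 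The case $\ell=2$ is then obtained from the bilinear definition of $Q_{ab}$ in the $q_n$ together with the product rule and the $\ell=1$ formula; it serves as the base for the even-length branch.

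For the inductive step with $\ell(\lambda)$ odd I would expand $Q_\lambda=\sum_i(-1)^{i+1}q_{\lambda_i}Q_{\lambda_1\cdots\widehat{\lambda_i}\cdots\lambda_\ell}$, apply $L_{-k}$ to each product by the product rule, and insert the $\ell=1$ base formula for $L_{-k}q_{\lambda_i}$ and the induction hypothesis for $L_{-k}Q_{\lambda_1\cdots\widehat{\lambda_i}\cdots\lambda_\ell}$. The terms carrying a single shifted factor $q_{\lambda_i+2k}$ or a shift $+2k\epsilon_j$ reassemble, via the same odd quadratic relation, into $\sum_i(\lambda_i+k)Q_{\lambda+2k\epsilon_i}$. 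The remaining tail terms split according to the index $s=0,\dots,k-1$; for fixed $s$ the coefficient of $\frac{1}{2}(-1)^s(k-s)$ is
\[
\sum_i(-1)^{i+1}\big(Q_{\lambda_i,2k-s,s}Q_{\lambda_1\cdots\widehat{\lambda_i}\cdots\lambda_\ell}+q_{\lambda_i}Q_{\lambda_1\cdots\widehat{\lambda_i}\cdots\lambda_\ell,\,2k-s,s}\big)-Q_{2k-s,s}Q_\lambda,
\]
which Lemma~\ref{Q-relation2}(1) (for $s=0$, where $Q_{2k,0}=q_{2k}$) and Lemma~\ref{Q-relation2}(3) (for $s\geq 1$) collapse to $Q_{\lambda,2k-s,s}$. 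Summing over $s$ yields the stated tail. The even case is handled identically, expanding by $Q_\lambda=\sum_i(-1)^iQ_{\lambda_1\lambda_i}Q_{\lambda_2\cdots\widehat{\lambda_i}\cdots\lambda_\ell}$ and invoking Lemma~\ref{Q-relation2}(2) and~(4).

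The main obstacle is the bookkeeping in the tail: one must verify that, uniformly in $s$, the anomaly term $-\frac{1}{2}(-1)^s(k-s)Q_{2k-s,s}Q_\lambda$ produced by the non-derivation part of the product rule cancels precisely the extra term $Q_{2k-s,s}Q_\lambda$ (respectively $q_{2k}Q_\lambda$ at $s=0$) that Lemma~\ref{Q-relation2} generates when folding the two families of products back into the single Pfaffian $Q_{\lambda,2k-s,s}$. Once this cancellation is seen to hold for every $s$, everything reduces to repeated use of the two quadratic relations and the general product rule, exactly as in Proposition~\ref{LQ}; the only novelty over $k=1,2$ is that the tail now ranges over the two-row shapes $(2k-s,s)$, each controlled by the corresponding case of Lemma~\ref{Q-relation2}.
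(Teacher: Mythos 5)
Your strategy is genuinely different from the paper's. The paper proves the $k=1,2$ cases (Proposition~\ref{LQ}) and then inducts on $k$, using the Virasoro bracket $[L_{-k},L_{-1}]=2(1-k)L_{-k-1}$ to generate $L_{-(k+1)}Q_{\alpha}$ from two already-known actions; no base case on short partitions for general $k$ is ever needed. You instead fix $k$ and induct on $\ell(\alpha)$, generalizing the proof of Proposition~\ref{LQ}. Your inductive step is sound: the reduction to strict $\alpha$ is correct, the main terms reassemble into $\sum_i(\alpha_i+k)Q_{\alpha+2k\epsilon_i}$ exactly as for $k=1,2$, and for each fixed $s$ the combination of the two families of products with the anomaly $-Q_{2k-s,s}Q_{\lambda}$ is literally the right-hand side of Lemma~\ref{Q-relation2}(1)/(3) (odd length) or (2)/(4) (even length), so the cancellation you flag as the main obstacle does hold uniformly in $s$ with nothing further to check.

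The genuine gap is in the base cases, which is where all the new content for general $k$ sits. For $\ell=1$, your generating-function identity is correct, but extracting the coefficient of $u^{n+2k}$ gives $L_{-k}q_n=(n+2k)q_{n+2k}-\sum_{i\geq 1,\,\mathrm{odd}}^{2k-1}it_iq_{n+2k-i}+\frac{1}{2}\sum_{j=0}^{k-1}(-1)^j(k-j)Q_{2k-j,j}\,q_n$, and matching this against the claim still requires
\begin{gather*}
kq_{n+2k}-\sum_{i\geq 1,\,\mathrm{odd}}^{2k-1}it_iq_{n+2k-i}=\frac{1}{2}\sum_{j=0}^{k-1}(-1)^j(k-j)\bigl(q_jQ_{n,2k-j}-q_{2k-j}Q_{n,j}\bigr),
\end{gather*}
which is exactly the ``rewriting through the quadratic relations'' you do not carry out; the paper does it by hand for $k=1,2$ using $t_1=q_1$ and $3t_3=3q_3-q_2q_1$, but for general $k$ a uniform argument is needed. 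More seriously, the even-length branch of your induction bottoms out at $\ell=2$, and $Q_{ab}$ admits no nontrivial Pfaffian decomposition into shorter $Q$'s, so Lemma~\ref{Q-relation2} cannot start that branch. Your proposed route through $Q_{ab}=q_aq_b+2\sum_i(-1)^iq_{a+i}q_{b-i}$ and the product rule produces sums such as $\sum_i c_i(a+i+k)q_{a+i+2k}q_{b-i}$ and $\sum_i c_iQ_{a+i,2k-s,s}\,q_{b-i}$, whose coefficients depend on $i$ and which do not visibly reassemble into $(a+k)Q_{a+2k,b}$ and $Q_{a,b,2k-s,s}$; this step needs a separate identity (for example via the two-row generating function for $Q_{ab}$) that the sketch does not supply. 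Until the $\ell=1$ and $\ell=2$ cases are actually established for all $k$, the induction does not start; the paper's commutator argument is precisely the device that avoids having to prove them.
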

\begin{proof}
Use induction on $k$. The cases $k=1,2$ are already shown in Proposition~\ref{LQ}. Thanks to the Virasoro relations, it suffices to show
\begin{gather*}
[L_{-k},L_{-1}]Q_{\alpha} \\
\qquad {} = 2(-k+1)\left(\sum_{i=1}^{\ell}(\alpha_i+(k+1))Q_{\alpha+2(k+1)\epsilon_i}
+\frac{1}{2}\sum_{i=0}^{k}{(-1)}^{i}((k+1)-i)Q_{\alpha, 2(k+1)-i, i}\right).
\end{gather*}
Since
\begin{gather*}
L_{-k}L_{-1}Q_{\alpha}=\sum_{i=1}^{\ell}(\alpha_i+1)L_{-k}Q_{\alpha+2\epsilon_{i}}+\frac{1}{2}L_{-k}Q_{\alpha 2}, \qquad \text{and}\\
L_{-1}L_{-k}Q_{\alpha}=\sum_{i=1}^{\ell}(\alpha_i+k)L_{-1}Q_{\alpha+2k\epsilon_{i}}+\frac{1}{2}\sum_{i=0}^{k-1}{(-1)}^{i}(k-i)L_{-1}Q_{\alpha, 2k-i, i},
\end{gather*}
we have
\begin{gather*}
 [L_{-k}, L_{-1}]Q_{\alpha}=\sum_{i=1}^{\ell}\big((\alpha_i+1)L_{-k}Q_{\alpha+2\epsilon_i}-(\alpha_i+k)L_{-1}Q_{\alpha+2k\epsilon_i}\big)\\
\hphantom{[L_{-k}, L_{-1}]Q_{\alpha}=}{} +\frac{1}{2}\left(L_{-k}Q_{\alpha 2}-\sum_{i=0}^{k-1}{(-1)}^{i}(k-i)L_{-1}Q_{\alpha, 2k-i, i}\right).
\end{gather*}
We write down terms in the right hand side:
\begin{gather*}
\sum_{i=1}^{\ell}(\alpha_i+1)L_{-k}Q_{\alpha+2\epsilon_{i}}=\sum_{i,j=1,i\ne j}^{\ell}(\alpha_i+1)(\alpha_j+k)Q_{\alpha+2\epsilon_i+2k\epsilon_j}\\
\qquad{} +\sum_{i=1}^{\ell}(\alpha_i+1)(\alpha_i+k+2)Q_{\alpha+2(k+1)\epsilon_i}
+\sum_{i=1}^{\ell}(\alpha_{i}+1)\frac{1}{2}\sum_{j=0}^{k-1}{(-1)}^{j}(k-j)Q_{\alpha+2\epsilon_i, 2k-j, j},\\
\sum_{i=1}^{\ell}(\alpha_i+k)L_{-1}Q_{\alpha+2k\epsilon_{i}}\\
\quad{} =\sum_{i=1}^{\ell}(\alpha_i+k)\left(\sum^{\ell}_{j=1, j\ne i}(\alpha_j+1)Q_{\alpha+2k\epsilon_i+2\epsilon_j}
+(\alpha_i+2k+1)Q_{\alpha+2(k+1)\epsilon_{i}}+\frac{1}{2}Q_{\alpha+2k\epsilon_i, 2}\right),\\
L_{-k}Q_{\alpha, 2}=\sum_{i=1}^{\ell}(\alpha_i+k)Q_{\alpha+2k\epsilon_i, 2}+(2+k)Q_{\alpha, 2+2k}
+\frac{1}{2}\sum_{i=0}^{k-1}{(-1)}^{i}(k-i)Q_{\alpha, 2, 2k-i, i},\\
\sum_{i=0}^{k-1}{(-1)}^{i}(k-i)L_{-1}Q_{\alpha, 2k-i, i}=\sum_{i=0}^{k-1}{(-1)}^{i}(k-i)\sum_{j=1}^{\ell}(\alpha_j+1)Q_{\alpha+2\epsilon_j, 2k-i, i}\\
\qquad{} +\sum_{i=0}^{k-1}{(-1)}^{i}(k-i)(2k-i+1)Q_{\alpha, 2k-i+2, i}
+\sum_{i=0}^{k-1}{(-1)}^{i}(k-i)(i+1)Q_{\alpha, 2k-i, i+2}\\
\qquad{} -\frac{1}{2}kQ_{\alpha, 2k, 2}+\frac{1}{2}\sum_{i=1}^{k-1}{(-1)}^{i}(k-i)Q_{\alpha, 2k-i, i, 2}.
\end{gather*}
Summing up, we have
\begin{gather*}
\left(\sum_{i=1}^{\ell}(\alpha_i+1)(\alpha_i+k+2)-\sum_{i=1}^{\ell}(\alpha_i+k)(\alpha_i+2k+1)\right)Q_{\alpha+2(k+1)\epsilon_{i}}\\
\qquad\quad{} +\frac{1}{2}\Biggl((2+k)Q_{\alpha, 2+2k}-\sum_{i=0}^{k-1}{(-1)}^{i}(k-i)(2k-i+1)Q_{\alpha, 2k-i+2, i}\\
\qquad\quad {} +\frac{1}{2}\sum_{i=0}^{k-1}{(-1)}^{i}(k-i)Q_{\alpha, 2, 2k-i, i}+\frac{1}{2}kQ_{\alpha, 2k, 2}-\frac{1}{2}\sum_{i=1}^{k-1}{(-1)}^{i}(k-i)Q_{\alpha, 2k-i, i, 2}\\
\qquad\quad{} -\sum_{i=0}^{k-1}{(-1)}^{i}(k-i)(i+1)Q_{\alpha, 2k-i, i+2}\Biggl)
\\
\qquad{}=2(-k+1)\sum_{i=1}^{\ell}(\alpha_i+(k+1))Q_{\alpha+2(k+1)\epsilon_i}+(-k+1)(k+1)Q_{\alpha, 2+2k}\\
\qquad\quad{}-\frac{1}{2}\Biggl( \sum_{i=1}^{k-1}{(-1)}^{i}(k-i)(2k-i+1)Q_{\alpha, 2k-i+2, i}
+kQ_{\alpha, 2k, 2}\\
\qquad\quad{} +\sum_{i=1}^{k-1}{(-1)}^{i}(k-i)(i+1)Q_{\alpha, 2k-i, i+2}\Biggl)\\
\qquad {}=2(-k+1)\sum_{i=1}^{\ell}(\alpha_i+(k+1))Q_{\alpha+2(k+1)\epsilon_i}+(-k+1)(k+1)Q_{\alpha, 2+2k}\\
\qquad\quad{} +(-k+1)\sum_{i=1}^{k}{(-1)}^{i}((k+1)-i)Q_{\alpha, 2(k+1)-i, i}.\tag*{\qed}
\end{gather*}\renewcommand{\qed}{}
\end{proof}

Together with the previously proved formula
\begin{eqnarray*}
L_{k}Q_{\lambda} = \sum_{i=1}^{2m}(\lambda_{i}-k) Q_{\lambda-2k\epsilon_i}
\end{eqnarray*}
for $\lambda = (\lambda_{1},\dots, \lambda_{2m})$, $k\geq1$ \cite[Theorem~2]{A-S-Y}, Theorem \ref{TM1}
completely describes the reduced Fock representation of the Virasoro algebra.
Consider the Lie subalgebra
$\mathfrak{g} = \sum_{|k|\leq1}\mathbb{C}l_{k}$ which is isomorphic to $\mathfrak{sl}(2,\mathbb{C})$.
Let $\mathcal{ESP}$ be the set of the strict partitions whose parts are all even numbers, and let
$V^{\rm even}$ be the subspace of $V$ spanned by the $Q_{\lambda}$ for $\lambda \in \mathcal{ESP}$.

\begin{Corollary}
The space $V^{\rm even}$ is invariant under the action of $\mathfrak{g}$.
\end{Corollary}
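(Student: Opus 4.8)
The plan is to reduce the assertion to a parity count on the indices of the $Q$-functions involved. Since $\mathfrak{g}$ is spanned by $l_{-1}$, $l_0$, $l_1$, and the reduced Fock representation sends $l_k \mapsto \frac{1}{2}L_k$, it is enough to show that each of the three operators $L_{-1}$, $L_0$, $L_1$ carries $V^{\rm even}$ into itself. As $V^{\rm even}$ is spanned by the $Q_\lambda$ with $\lambda \in \mathcal{ESP}$, I would check each operator on such a basis vector and verify that the output is once more a linear combination of $Q_\mu$ with $\mu \in \mathcal{ESP}$ (the zero vector being allowed).

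First I would treat $L_0$: from $L_0 = \sum_{j\geq 1} j t_j \partial_j + \frac{1}{8}\,{\rm id}$ it follows that $L_0$ acts on $V(n)$ as the scalar $n + \frac{1}{8}$, so $L_0 Q_\lambda = \big(|\lambda| + \frac{1}{8}\big) Q_\lambda$ lies in $V^{\rm even}$ trivially. For $L_1$ I would invoke $L_1 Q_\lambda = \sum_{i=1}^{2m}(\lambda_i - 1) Q_{\lambda - 2\epsilon_i}$ from \cite[Theorem~2]{A-S-Y}: subtracting $2$ from an even part yields an even non-negative integer, so every entry occurring in $\lambda - 2\epsilon_i$ is even. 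For $L_{-1}$ I would use the $k=1$ case of Theorem~\ref{TM1}, equivalently Proposition~\ref{LQ}(1), namely $L_{-1}Q_\lambda = \sum_{i=1}^{\ell}(\lambda_i + 1) Q_{\lambda + 2\epsilon_i} + \frac{1}{2} Q_{\lambda, 2}$: adding $2$ to an even part, and adjoining the even entry $2$, again produces only even indices.

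The step that needs genuine care, and which I regard as the main obstacle, is to confirm that the reduction rules (1)--(3) for $Q_\alpha$ on non-negative integer sequences, together with the convention $Q_{a,-a} = (-1)^{a-1}$, never destroy the evenness of the indices. Here I would note that permuting parts does not alter their values; that the vanishing $Q_\alpha = 0$ for a repeated positive part lands in $V^{\rm even}$ automatically; and that deleting a $0$ removes an even entry while leaving the other even entries intact. The only way a negative index can surface is when $\ell(\lambda)$ is odd: to apply the even-length formula for $L_1$ one first writes $Q_\lambda = Q_{\lambda_1 \dots \lambda_\ell 0}$, and the summand coming from the adjoined $0$ carries a single $-2$; but $-2$ is even, and by $Q_{a,-a} = (-1)^{a-1}$ any such term collapses to $0$ or to a scalar in $V(0) = \mathbb{C}\,Q_\emptyset \subseteq V^{\rm even}$. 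Consequently every index that ever appears is even, each resulting $Q_\mu$ is a scalar multiple of a $Q$-function attached to an all-even strict partition or is zero, and the invariance of $V^{\rm even}$ under $L_{-1}$, $L_0$, $L_1$, hence under $\mathfrak{g}$, follows.
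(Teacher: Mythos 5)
Your proposal is correct and is exactly the argument the paper intends: the Corollary is stated there without proof as an immediate consequence of the $k=1$ case of Theorem~\ref{TM1} (equivalently Proposition~\ref{LQ}(1)) together with the formula $L_kQ_{\lambda}=\sum_i(\lambda_i-k)Q_{\lambda-2k\epsilon_i}$ from \cite{A-S-Y}, plus the observation that $L_0$ acts as a scalar on each $V(n)$, so every index that appears stays even and the reduction rules preserve this. One small misstatement in your last paragraph: when $\ell(\lambda)$ is odd and the padded part $0$ produces the index $-2$ under $L_1$, the convention $Q_{2,-2}=-1$ and the Pfaffian expansion give that this term is $0$ or $\pm Q_{\lambda\setminus\{2\}}$, a $Q$-function of degree $|\lambda|-2$ with all parts even, not in general a scalar in $V(0)$; the conclusion that it lies in $V^{\rm even}$ is unaffected.
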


We are interested in the space $V^{\rm even}$ because of the following conjecture: the set of Hirota bilinear equations
\begin{eqnarray*}
Q_{\lambda}\big(\widetilde{D}\big)\tau \cdot \tau=0,\qquad \lambda \in \mathcal{SP}\backslash \mathcal{ESP}
\end{eqnarray*}
coincides with those of the KdV hierarchy, where $\widetilde{D}=\big(D_1,\frac{1}{3}D_3,\frac{1}{5}D_5,\dots\big)$ is the Hirota differential operator. Namely $\big(V^{\rm even}\big)^{\perp}$ is conjecturally the space of Hirota equations for the KdV hierarchy.
For example,
\begin{eqnarray*}
Q_{3,1}\big(\widetilde{D}\big)=\frac{1}{12}D_1^4-\frac{1}{3}D_1D_3
\end{eqnarray*}
corresponds to the original KdV equation.

\subsection*{Acknowledgements}

After completing the draft of the present paper we had a chance to see the preprint~\cite{L-Y}, where the same formula as ours is proved as a part of authors' theory on BGW tau functions. Since the proof is slightly different, we decided to keep our proof in the present paper. We are grateful to the referees for informing~\cite{L-Y} to us, as well as other useful comments which improved the paper.
The funding was provided by KAKENHI (Grant No.~17K05180).

\pdfbookmark[1]{References}{ref}
\LastPageEnding


\begin{thebibliography}{99}
\footnotesize\itemsep=0pt

\bibitem{A-S-Y}
Aokage K., Shinkawa E., Yamada H.F., Pfaffian identities and {V}irasoro
 operators, \href{https://doi.org/10.1007/s11005-020-01265-1}{\textit{Lett. Math. Phys.}} \textbf{110} (2020), 1381--1389.

\bibitem{H-H}
Hoffman P.N., Humphreys J.F., Projective representations of the symmetric
 groups. $Q$-functions and shifted tableaux, \textit{Oxford Mathematical Monographs},
 The Clarendon Press, Oxford University Press, New York, 1992.

\bibitem{J-M}
Jimbo M., Miwa T., Solitons and infinite-dimensional {L}ie algebras,
 \href{https://doi.org/10.2977/prims/1195182017}{\textit{Publ. Res. Inst. Math. Sci.}} \textbf{19} (1983), 943--1001.

\bibitem{L-Y}
Liu X., Yang C., $Q$-polynomial expansion for {B}rezin--{G}ross--{W}itten
 tau-function, \href{https://arxiv.org/abs/2104.01357}{arXiv:2104.01357}.

\bibitem{W-Y}
Wakimoto M., Yamada H.-F., The {F}ock representations of the {V}irasoro algebra
 and the {H}irota equations of the modified {KP} hierarchies,
 \href{https://doi.org/10.32917/hmj/1206130440}{\textit{Hiroshima Math.~J.}} \textbf{16} (1986), 427--441.

\bibitem{Y}
Yamada H.-F., Reduced Fock representation of the Virasoro algebra, in
 Proceedings of the 35th Symposium on Algebraic Combinatorics, 2018, 38--45, available at \url{https://hnozaki.jimdofree.com/proceedings-symp-alg-comb/no-35/}.

\end{thebibliography}
\end{document}